\DeclareMathOperator{\var}{var}
\DeclareMathOperator{\cor}{cor}
\DeclareMathOperator{\cov}{cov}
\newcommand{\Ypre}{Y_{\mathrm{pre}}}
\newcommand{\Ypost}{Y_{\mathrm{post}}}
\def\commenton{0}
\newcommand{\DTcomment}[1]{\if\commenton1{\color{red}(DT: #1)}\fi}
\newcommand{\KHcomment}[1]{\if\commenton1{\color{blue}(KH: #1)}\fi}
\newcommand{\OScomment}[1]{\if\commenton1{\color{orange}(OS: #1)}\fi}
\newtheorem{assumption}{Assumption}
\newtheorem{theorem}{Theorem}
\title{On the Limits of Regression Adjustment\footnote{Extended abstract accepted to Conference on Digital Experimentation (CODE) 2023, Cambridge, MA, USA.}}
\author{Daniel Ting}
\author{Kenneth Hung}
\affil{Meta \\ {\tt\{dting, kenhung\}@meta.com}}
\date{\today}
\begin{document}

\maketitle

\section{Introduction}
\label{sec:intro}

Regression adjustment \cite{lin2013agnostic}, sometimes known as Controlled-experiment Using Pre-Experiment Data (CUPED) \cite{deng2013improving}, is an important technique in internet experimentation. It decreases the variance of effect size estimates, often cutting confidence interval widths in half or more while never making them worse. It does so by carefully regressing the goal metric against pre-experiment features to reduce the variance.

The tremendous gains of regression adjustment begs the question: How much better can we do by engineering better features from pre-experiment data, for example by using machine learning techniques \cite{guo2021machine,guo2023generalized} or synthetic controls \cite{zhang2021regression}? Could we even reduce the variance in our effect sizes arbitrarily close to zero with the right predictors?

Unfortunately, our answer is negative. A simple form of regression adjustment, which uses just the pre-experiment values of the goal metric, captures most of the benefit. Specifically, under a mild assumption that observations closer in time are easier to predict that ones further away in time, we upper bound the potential gains of more sophisticated feature engineering, with respect to the gains of this simple form of regression adjustment. The maximum reduction in variance is $50\%$ in \Cref{thm:main}, or equivalently, the confidence interval width can be reduced by at most an additional $29\%$.

This result allows us to assess the potential value of investing in more advanced versions of regression adjustment. Furthermore, this upper bound suggests a fundamental boundary for how much analysis-based variance reduction that adjusts random imbalance from sampling can achieve. Many methods, such as inverse propensity score weighting or synthetic control, fall under this framework and suffer the same limitations. Breaking this barrier may require injecting domain knowledge on the causal mechanism which enables us to adjust for appropriate post-treatment variables \cite{deng2023variance}, or changing the experimental design directly.

\section{Main Result}
\label{sec:main-result}

When $\rho \in [0, 1]$ is the correlation between the post-experiment values and the pre-experiment values in the goal metric, the regression adjustment estimator reduces the variance by a factor of $1 - \rho^2$, i.e.\ when $\rho \approx 1$ and the post-experiment values can be nearly perfectly predicted by the pre-experiment values, then
the variance of the treatment effect estimator goes to $0$. These pre-experiment covariates can be further refined via feature engineering. For example, a machine learning model can generate predictions for the post-experiment metric, which we can use as covariates \cite{guo2021machine,guo2023generalized}.

While any covariates that are known to be unaffected by the treatment assignment is valid \cite{deng2013improving}, practitioners often restrict themselves to ``safe'' covariates that are pre-treatment \cite{xie2016improving}. Under a reasonable assumption (\Cref{asm:main}) that we cannot predict the goal metric in the post-experiment period any better than in the pre-experiment period, we show that this ``safe'' form of advanced regression adjustment is limited in further variance reduction. Specifically,
\[
    \var(\hat\delta_{\text{advanced r.a.}}) \ge \frac{1}{1 + \rho} \var(\hat\delta_{\text{basic r.a.}}).
\]
In particular, when basic regression adjustment is very effective ($\rho \approx 1$), advanced regression adjustment via feature engineering can only reduce it further by $50\%$; when basic regression adjustment is not effective ($\rho \approx 0$), advanced regression adjustment will not be effective either.

\section{Mathematical setup}
\label{sec:math}

Suppose $\Ypre$ and $\Ypost$ are the pre-experiment and post-experiment values in the goal metric of a randomly sampled user. Then the basic regression adjustment reduces the variance of the estimated average treatment effect by
\[
    \frac{\var(\hat\delta_{\text{basic r.a.}})}{\var(\hat\delta_{\text{original}})} = 1 - \rho^2 = 1 - \cor(\Ypre, \Ypost)^2.
\]

More advanced regression adjustment methods attempt to design a covariate $X$ that is as correlated with $\Ypost$ as possible. Typical advanced regression adjustment methods include:
\begin{itemize}
    \item {\em Using multiple pre-experiment covariates.} We can consider the resulting linear combination of the multiple pre-experiment covariates as $X$.
    \item {\em Producing predictions of $\Ypost$ based on external datasets.} While the predictions can be used directly in a diff-in-diff way, \cite{guo2023generalized} recommends regressing $\Ypost$ on these predictions to maintain prediction unbiasedness, a property that gives more robustness to model misspecification. In this case, these predictions play the role of $X$.
    \item {\em Producing predictions of $\Ypost$ using $\Ypost$.} Using post-experiment data of specifically goal metric is technically not allowed in regression adjustment. \cite{guo2021machine} circumvents this by cross-fitting to avoid overfitting. We are now effectively in the previous scenario --- the model is always based on a different split, i.e.\ data external to this split.
\end{itemize}

Suppose $X$ is the best possible covariate for regression adjustment, and denote the correlation matrix of $(X, \Ypre, \Ypost)$ as follows.
\begin{equation}
    \begin{pmatrix}
        1 & \sigma & \tau \\
        \sigma & 1 & \rho \\
        \tau & \rho & 1
    \end{pmatrix}.
\label{eq:cor-mat}
\end{equation}
Our assumption in \Cref{sec:main-result} can be stated more formally as \Cref{asm:main}.
\begin{assumption}
\label{asm:main}
    The correlation between $X$ and $\Ypost$ is at most the correlation between $X$ and $\Ypre$. In other words, $\tau \le \sigma$.
\end{assumption}

The variance reduction from advanced regression adjustment is $1 - \tau^2$, so the relative variance reduction compared to basic regression adjustment is given by $(1 - \tau^2) / (1 - \rho^2)$. Our main result gives a lower bound to this ratio.
\begin{theorem}
\label{thm:main}
    Suppose $X$ is the best possible covariate for regression adjustment. Under \Cref{asm:main}, we have
    \[
        \frac{1 - \tau^2}{1 - \rho^2} \ge \frac{1}{1+\rho}.
    \]
\end{theorem}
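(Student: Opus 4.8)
The plan is to exploit the fact that $(X, \Ypre, \Ypost)$ is a genuine correlation matrix, which forces its determinant to be nonnegative (positive semidefiniteness). This is the only structural constraint we have beyond Assumption~\ref{asm:main}, so it must be what we leverage.

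First I would write out the determinant of the matrix in \eqref{eq:cor-mat} and set the condition that it be nonnegative:
\[
    \det = 1 + 2\sigma\rho\tau - \sigma^2 - \rho^2 - \tau^2 \ge 0.
\]
Rearranging, this says $\tau^2 - 2\sigma\rho\,\tau + (\sigma^2 + \rho^2 - 1) \le 0$, which is a quadratic inequality in $\tau$ (treating $\sigma, \rho$ as fixed). This confines $\tau$ to lie between the two roots $\sigma\rho \pm \sqrt{(1-\sigma^2)(1-\rho^2)}$. In particular we obtain the upper bound
\[
    \tau \le \sigma\rho + \sqrt{(1-\sigma^2)(1-\rho^2)}.
\]
Since the claim is a \emph{lower} bound on $1 - \tau^2$, i.e.\ an \emph{upper} bound on $\tau^2$, I need to maximize $\tau^2$, and the right boundary above is the natural candidate (after checking, via Assumption~\ref{asm:main}, that $\tau$ stays nonnegative or that $\sigma \ge \tau \ge 0$ so the magnitude is governed by this boundary rather than the lower root).

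Next I would substitute the extremal value $\tau = \sigma\rho + \sqrt{(1-\sigma^2)(1-\rho^2)}$ and compute $1 - \tau^2$. The target ratio becomes a function of the two free parameters $\sigma$ and $\rho$, and I expect the $\rho$-dependence to factor out cleanly, leaving an optimization over $\sigma \in [\,\cdot\,, 1]$ where Assumption~\ref{asm:main} ($\tau \le \sigma$) pins down the relevant range. The factor $1/(1+\rho)$ on the right-hand side strongly suggests that $1 - \rho^2 = (1-\rho)(1+\rho)$ will cancel partially, so I would aim to show $(1 - \tau^2)/(1-\rho^2) \ge 1/(1+\rho)$ reduces to something like $1 - \tau^2 \ge 1 - \rho$, i.e.\ $\tau^2 \le \rho$, at the extreme.

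The main obstacle I anticipate is handling Assumption~\ref{asm:main} correctly: the unconstrained maximizer of $\tau$ from positive semidefiniteness may violate $\tau \le \sigma$, so the true worst case could occur on the constraint boundary $\tau = \sigma$ rather than on the determinant boundary. I would therefore check both regimes — substituting $\tau = \sigma$ into the PSD constraint to see what it forces, and comparing against the interior extremum — and argue that the binding configuration is the one saturating both the determinant condition and (if needed) the assumption simultaneously. Verifying that this joint worst case yields exactly $1/(1+\rho)$, and that it is attainable (so the bound is tight), is the delicate step; the rest is routine algebra.
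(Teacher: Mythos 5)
Your proposal has a genuine gap: it never uses the hypothesis that $X$ is the \emph{best possible} covariate, and without that hypothesis the inequality you are trying to prove is simply false. Positive semidefiniteness of \eqref{eq:cor-mat} together with \Cref{asm:main} is not enough. Concretely, take $\rho = 0$, $\sigma = \tau = 0.7$: the matrix
\[
    \begin{pmatrix} 1 & 0.7 & 0.7 \\ 0.7 & 1 & 0 \\ 0.7 & 0 & 1 \end{pmatrix}
\]
is positive definite (its determinant is $0.02 > 0$) and satisfies $\tau \le \sigma$, yet
\[
    \frac{1 - \tau^2}{1 - \rho^2} = 0.51 < 1 = \frac{1}{1 + \rho}.
\]
Since the bound is equivalent to $\tau^2 \le \rho$ (for $\rho < 1$), and here $\tau^2 = 0.49 > 0 = \rho$, no amount of case analysis over the determinant boundary and the boundary $\tau = \sigma$ can rescue the argument: your feasible set genuinely contains points violating the conclusion. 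What saves the theorem in this example is that such an $X$ is not optimal --- regressing $\Ypost$ on both $X$ and $\Ypre$ (with a negative coefficient on $\Ypre$) achieves correlation $\approx 0.98 > 0.7$, so the hypothesis of \Cref{thm:main} is not met.

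The missing ingredient is the first-order optimality condition for $X$. Because no linear combination $aX + b\Ypre$ can correlate with $\Ypost$ better than $X$ itself, the coefficient of $\Ypre$ in the regression of $\Ypost$ on $(X, \Ypre)$ must vanish, which by the normal equations (or the Lagrange-multiplier computation in the paper) forces $\rho = \sigma\tau$; equivalently, the partial correlation of $\Ypre$ and $\Ypost$ given $X$ is zero. Once you have $\rho = \sigma\tau$, the proof is one line: \Cref{asm:main} gives $\tau^2 \le \sigma\tau = \rho$, which is exactly the equivalent form of the desired bound. Note that the paper's argument never needs the determinant inequality at all --- the optimality of $X$, not positive semidefiniteness, is the binding structural constraint.
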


\begin{proof}
    Since $X$ is the best possible covariate for regression adjustment, the maximum correlation between $\Ypost$ and any linear combination in form of $aX + b\Ypre$ must be attained at $(a, b) = (1, 0)$. Hence the optimization problem
    \[
        \text{maximize } \cov(aX + b\Ypre, \Ypost) \text{ subject to } \var(aX + b\Ypre) = 1
    \]
    is solved at $(a, b) = (1, 0)$. With \eqref{eq:cor-mat} and adding in $\lambda$ as a Lagrange multiplier, we can rewrite the optimization problem as
    \[
        \text{maximize } (a\tau + b\rho) - \lambda(a^2 + b^2 - 2ab\sigma - 1).
    \]
    In particular, the partial derivatives with respect to $a$ and $b$ should be $0$ at $(a, b) = (1, 0)$, so
    \[
        \tau - 2\lambda = 0 \text{ and } \rho - 2\sigma\lambda = 0 \Longrightarrow \rho = \sigma\tau.
    \]
    Finally, we have $\tau^2 \le \sigma\tau = \rho$ and hence
    \[
        \frac{1 - \tau^2}{1 - \rho^2} \ge \frac{1 - \rho}{1 - \rho^2} = \frac{1}{1+\rho}. \qedhere
    \]
\end{proof}

\section{Discussion of \texorpdfstring{\Cref{asm:main}}{Assumption 1}}

While highly intuitive, \Cref{asm:main} itself is not generally testable, in the sense that even if it holds for some $X$, there may be other covariates $X'$ that we can engineer where it no longer holds.

There are some heuristics why \Cref{asm:main} plausibly holds. For example, if we believe in some loose form of stationarity, then our ability to predict $\Ypost$ after $t$ days of treatment should be similar to predicting $\Ypre$ using data $t$ days before the experiment begins, which in turns less than our ability to predict $\Ypre$ using all available pre-experiment data.

We also note some cases where the assumption may reasonably fail. Seasonality can be one reason the assumption fails. Suppose our outcome is post-treatment time period is winter and our pre-treatment period is fall. If our goal metric measures home heating costs, then it is reasonable to believe that last winter's home heating costs ($X$) could be a better predictor of next winter's heating costs ($\Ypost$) than it is of the fall's ($\Ypre$).

Another reason is if the post-treatment time period is significantly longer than the pre-treatment period. In this case, the post-treatment period has smoothed over more daily fluctuations and may be easier to predict.

\printbibliography

\end{document}